\makeatletter \hypersetup{pdftitle={\@title}}}
 \gdef\xxxmark{%
   \expandafter\ifx\csname @mpargs\endcsname\relax 
     \expandafter\ifx\csname @captype\endcsname\relax 
       \marginpar{xxx}
     \else
       xxx 
     \fi
   \else
     xxx 
   \fi}
 \gdef\xxx{\@ifnextchar[\xxx@lab\xxx@nolab}
 \long\gdef\xxx@lab[#1]#2{{\bf [\xxxmark #2 ---{\sc #1}]}}
 \long\gdef\xxx@nolab#1{{\bf [\xxxmark #1]}}
\let\realbfseries=\bfseries
\def\bfseries{\realbfseries\boldmath}
\newif\ifabstract
\newif\iffull
\let\epsilon=\varepsilon
\newsavebox{\theorembox}
\newsavebox{\factbox}
\newsavebox{\lemmabox}
\newsavebox{\remarkbox}
\newsavebox{\corollarybox}
\newsavebox{\propositionbox}
\newsavebox{\examplebox}
\newsavebox{\conjecturebox}
\newsavebox{\algbox}
\newsavebox{\qbox}
\newsavebox{\problembox}
\newsavebox{\definitionbox}
\newsavebox{\assumptionbox}
\newsavebox{\hypothesisbox}
\savebox{\theorembox}{\noindent\bf Theorem}
\savebox{\factbox}{\noindent\bf Fact}
\savebox{\lemmabox}{\noindent\bf Lemma}
\savebox{\remarkbox}{\noindent\bf Remark}
\savebox{\corollarybox}{\noindent\bf Corollary}
\savebox{\propositionbox}{\noindent\bf Proposition}
\savebox{\examplebox}{\noindent\bf Example}
\savebox{\conjecturebox}{\noindent\bf Conjecture}
\savebox{\algbox}{\noindent\bf Algorithm}
\savebox{\definitionbox}{\noindent\bf Definition}
\savebox{\problembox}{\noindent\bf Problem}
\savebox{\assumptionbox}{\noindent\bf Assumption}
\savebox{\hypothesisbox}{\noindent\bf Hypothesis}
\newtheorem{theorem}{\usebox{\theorembox}}
\newtheorem{lemma}[theorem]{\usebox{\lemmabox}}
\newtheorem{remark}[theorem]{\usebox{\remarkbox}}
\newcommand{\qed}{\;\;\;\Box}
\newenvironment{proof}{\par{\bf Proof:}}{\(\qed\) \par}
\begin{document}


\title{Games on Social Networks: On a Problem Posed by Goyal}

\author{Ali Kakhbod~\thanks{University of Michigan, Ann Arbor.
                Email: {\tt akakhbod@umich.edu} }
\and Demosthenis~Teneketzis~\thanks{University of Michigan, Ann Arbor.
                Email: {\tt teneket@umich.edu}}
}

\maketitle
\begin{abstract}
Within the context of games on networks S. Goyal (Goyal (2007), pg. 39) posed the following problem. Under any arbitrary but fixed topology, does there exist at least one pure Nash equilibrium that exhibits a positive relation between the cardinality of a player's set of neighbors and its utility payoff? In this paper we present a class of  topologies in which pure Nash equilibria with the above property do not exist. 
\end{abstract}

\begin{section}{Introduction-Motivation}
Games on social networks is a rapidly developing discipline the importance of which has been extensively discussed in recent publications Goyal (2007), Jackson (2008) and Vega-Redondo (2007). One of the goals of this discipline is the development of a framework within which the effect of the structure of relationships on individual behavior and well-being as well as on aggregate outcomes can be examined systematically. \\

The social sharing of information (Bramoulle and Kranton (2007)) which is privately costly to collect is one of the economic contexts within which the effect of network structure on individual behavior and aggregate outcomes are currently being  studied. Within this context Goyal posed a question that addresses the effect of  ones's neighbors on his/her payoff at equilibrium. In this paper we provide an answer to Goyal's problem.\\

The paper is organized as follows: In section \ref{i} we present the social network model and Goyal's problem. In section \ref{ii} we present preliminary results that are essential in determining a solution to Goyal's problem. In section \ref{iii} we present an example that provides a negative answer to Goyal's problem. We conclude in section \ref{iv}.    
\end{section}
\begin{section}{The Model and Goyal's Problem}
\label{i}
We consider a set $\mathcal{N}:=\{1,2,3\cdots,\texttt{N}\}$ of players. These players are connected, directly or indirectly, with one another through a network whose topology is denoted by $\textbf{g}$; $\textbf{g}$ is fixed. Since $\textbf{g}$ is fixed the set $N_i(\textbf{g})$, $i \in \mathcal{N}$, of neighbors of player $i$ is well-defined; two players $i,j$ are called neighbors if they are directly connected with a link in the graph that corresponds to the topology $\textbf{g}$. Each player can choose a strategy/effort $s_i \in \textbf{S}$, where \textbf{S} is compact and convex in $\mathbb{R_{+}}$. Let $s \in \textbf{S}^{\texttt{N}}$, $s:=(s_1,s_2,\cdots,s_{\texttt{N}})$ denote a strategy profile. Given $s \in \textbf{S}$, the payoff of player $i, i \in \mathcal{N},$ is given
\begin{eqnarray}
\Pi_i(s|\textbf{g})=f\left[s_i+\sum_{s_j\in N_i(\textbf{g})}s_j\right]-\textbf{c}s_i,
\end{eqnarray}
where $\textbf{c}$ is the marginal cost of effort and it is a fixed positive constant. The function $f$ is such that $f(0)=0,f'(\cdot)>0$,  $f''(\cdot)<0,$ and there is a constant $\delta$ such that $f'(\delta)=\textbf{c}$. The function $f$ and the constants $\textbf{c}$ and $\delta$ are common knowledge among the players. Consider a game played by the $\texttt{N}$ players, with  strategy space and utility functions described above. Assume for the moment that there exist Nash equilibria in pure strategies,  called pure Nash equilibria (\textit{PNE}), for this game. Denote by $\textit{PNE}_{\textbf{g}}$ the set of $\textit{PNE}$ of the game when the network topology is $\textbf{g}$. Denote by $\textit{PNE}_{\textbf{g}}^+$ the subset of $\textit{PNE}_{\textbf{g}}$ that have the following property:\\
\begin{eqnarray}
\textit{PNE}_{\textbf{g}}^+=\Big\{s^*\in \textit{PNE}_{\textbf{g}}: \forall \ i,j \in \mathcal{N}, |N_i(\textbf{g})|>|N_j(\textbf{g})| \ \Rightarrow \ \Pi_i(s^*|\textbf{g})\geq \Pi_j(s^*|\textbf{g})\Big\},
\end{eqnarray}
where $|N_i(\textbf{g})|$ denotes the cardinality of the set $N_i(\textbf{g})$.
$\textit{PNE}_{\textbf{g}}^+$ is called the set of $\textit{PNE}_{\textbf{g}}$ that exhibit for every player $i \in \mathcal{N}$ a positive relation between the cardinality of $N_i(\textbf{g})$ and the player's utility/payoff under $\textbf{g}$. \\

Let $\mathcal{G}$ denote the set of the all possible topologies of the  $\texttt{N}$ players' network. Goyal (2007),  pg. 39, posed the following problem:\\

\textbf{Goyal's Problem}: \textit{Under any arbitrary but fixed topology $\textbf{g} \in \mathcal{G}$ is the set $\textit{PNE}_{\textbf{g}}^+$ nonempty?}\\

One would expect a positive answer to the above question. In the model under consideration the players' payoff functions are identical. Thus, it is reasonable to expect that players with a higher number of neighbors would earn a higher  payoff at equilibrium. In this paper we provide a negative (thus, in our opinion counterintuitive) answer to Goyal's problem. We present, via an example,  a class of topologies for which $\textit{PNE}_{\textbf{g}}^+$ is empty.
\end{section}
\begin{section}{Preliminaries}
\label{ii}
First we assert that for every topology $\textbf{g}\in \mathcal{G}$ the corresponding game possesses $\textit{PNE}$. This assertion follows from the compactness of the space $\textbf{S}$, continuity of $f(\cdot)$ in the players' strategies, the concavity of $f(\cdot)$ in each player's strategy, and a result of Debreu, Glicksberg and Fan (Fudenberg and Tirole (1991) pg. 34). \\
For any $\textbf{g} \in \mathcal{G}$ all $\textit{PNE}_{\textbf{g}}$ can be characterized by the following result. 
\begin{theorem}[Goyal (2007), pg. 35]
\label{yek}
For a given $\textbf{g}\in \mathcal{G}$, a strategy profile $s^*=(s_1^*,s_2^*,\cdots,s_{\texttt{N}}^*)\in \textit{PNE}_{\textbf{g}}$, if and only if for every $i \in \mathcal{N}$
\begin{eqnarray}
s_i^*= \left\{ \begin{array}{ll}
         \delta-\sum_{j\in N_i(\textbf{g})}s_j^*, & \mbox{if $\sum_{j\in N_i(\textbf{g})}s_j^* < \delta$},\\
        0 & \mbox{if $\sum_{j\in N_i(\textbf{g})}s_j^* \geq \delta$}.\end{array} \right.
\end{eqnarray}
\end{theorem}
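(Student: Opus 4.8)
The plan is to reduce the equilibrium condition to a single-player best-response computation and then solve that one-dimensional concave optimization explicitly. A profile $s^*$ lies in $\textit{PNE}_{\textbf{g}}$ precisely when, for every $i \in \mathcal{N}$, the strategy $s_i^*$ maximizes $\Pi_i(\,\cdot\,\mid\textbf{g})$ over $\textbf{S}$ while the neighbors' efforts are held fixed at their equilibrium values. So first I would fix a player $i$, abbreviate the aggregate neighbor effort by $\sigma_i := \sum_{j \in N_i(\textbf{g})} s_j^*$, and study the scalar function $\phi(s_i) := f(s_i + \sigma_i) - \textbf{c}\,s_i$ on the compact convex set $\textbf{S} \subseteq \mathbb{R}_+$.

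The next step is to exploit the concavity of $f$. Since $\phi''(s_i) = f''(s_i + \sigma_i) < 0$, the map $\phi$ is strictly concave, so it admits a unique maximizer on $\textbf{S}$, characterized by the usual first-order conditions. The derivative is $\phi'(s_i) = f'(s_i + \sigma_i) - \textbf{c}$, and because $f'' < 0$ the map $f'$ is strictly decreasing, hence injective; combined with the defining relation $f'(\delta) = \textbf{c}$, this forces the unconstrained stationary point to solve $s_i + \sigma_i = \delta$, i.e. to occur at $s_i = \delta - \sigma_i$.

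I would then split into the two cases that generate the stated formula. If $\sigma_i < \delta$, the stationary point $\delta - \sigma_i$ is strictly positive and feasible (here one uses that $\textbf{S}$ contains the interval $[0,\delta]$, so neither bound of $\textbf{S}$ binds), whence strict concavity makes it the unique best response. If $\sigma_i \ge \delta$, the stationary point lies at or below $0$; evaluating $\phi'(0) = f'(\sigma_i) - \textbf{c} \le f'(\delta) - \textbf{c} = 0$ and invoking concavity shows $\phi$ is nonincreasing on $\mathbb{R}_+$, so the constrained maximizer is the corner $s_i = 0$. Assembling these per-player best responses over all $i \in \mathcal{N}$ reproduces exactly the piecewise description in the statement, and since the equivalence ``$s_i^*$ is a best response $\Leftrightarrow$ the formula holds'' is argued in both directions at once, this settles both implications of the theorem simultaneously.

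Finally, a remark on where care is genuinely needed: the substantive point is the corner analysis, verifying that when $\sigma_i \ge \delta$ no strictly positive effort can be profitable, which is precisely where monotonicity of $f'$ together with $f'(\delta)=\textbf{c}$ enters. One must also confirm that the upper boundary of $\textbf{S}$ never binds, and indeed it does not, because the maximizer never exceeds $\delta$. The remainder is a routine concave-programming argument.
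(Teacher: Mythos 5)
Your argument is correct, and it is worth noting that the paper itself supplies no proof of this statement: it is quoted as a known result from Goyal (2007), so there is nothing internal to compare against. Your best-response reduction is the standard (and essentially the only) proof: strict concavity of $\phi(s_i)=f(s_i+\sigma_i)-\textbf{c}s_i$ plus strict monotonicity of $f'$ with $f'(\delta)=\textbf{c}$ yields the unique maximizer $\max\{\delta-\sigma_i,0\}$, and uniqueness of the per-player best response gives both directions of the equivalence at once. You were also right to flag the one genuine gap in the model as stated, namely that the paper only assumes $\textbf{S}$ compact and convex in $\mathbb{R}_+$ without explicitly requiring $[0,\delta]\subseteq \textbf{S}$; the theorem is false without that implicit feasibility assumption (e.g.\ if $0\notin\textbf{S}$ or $\max\textbf{S}<\delta$), so making it explicit, as you did, is exactly the right care.
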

The following lemma reveals a property of $\textit{PNE}_{\textbf{g}}^+$. This property along with the result of Theorem \ref{yek} are critical in establishing our answer to Goyal's problem.
\begin{lemma}
\label{lem}
Let $s^*:=(s_1^*,s_2^*,\cdots,s_\texttt{N}^*)\in \textit{PNE}_{\textbf{g}}^+$. Then for any $u,v \in \mathcal{N}$, if $|N_u(\textbf{g})|<|N_v(\textbf{g})|$ and $s_u^*=0$, then $s_v^*=0$.
\end{lemma}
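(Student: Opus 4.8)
The plan is to argue by contradiction, exploiting the equilibrium characterization of Theorem \ref{yek} to pin down the payoff of any player according to whether that player is \emph{active} ($s_i^* > 0$) or a \emph{free-rider} ($s_i^* = 0$). The central observation I would extract from Theorem \ref{yek} is that at any $s^* \in \textit{PNE}_{\textbf{g}}$ the aggregate effort experienced by each player, namely $s_i^* + \sum_{j \in N_i(\textbf{g})} s_j^*$, is always at least $\delta$, and equals $\delta$ exactly whenever $s_i^* > 0$.

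First I would treat the free-rider $u$. Since $s_u^* = 0$, Theorem \ref{yek} forces $\sum_{j \in N_u(\textbf{g})} s_j^* \geq \delta$ (this covers both the boundary of the first case and the whole of the second case). Because the marginal cost term $\textbf{c}s_u^*$ vanishes and $f$ is increasing, this gives $\Pi_u(s^*|\textbf{g}) = f\big(\sum_{j \in N_u(\textbf{g})} s_j^*\big) \geq f(\delta)$.

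Next I would suppose, toward a contradiction, that $v$ is active, i.e. $s_v^* > 0$. Then Theorem \ref{yek} places us in its first case, so $s_v^* + \sum_{j \in N_v(\textbf{g})} s_j^* = \delta$ and hence $\Pi_v(s^*|\textbf{g}) = f(\delta) - \textbf{c}s_v^*$. Since $\textbf{c} > 0$ and $s_v^* > 0$, this is strictly less than $f(\delta)$. Combining the two bounds yields $\Pi_v(s^*|\textbf{g}) < f(\delta) \leq \Pi_u(s^*|\textbf{g})$.

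This is where the defining property of $\textit{PNE}_{\textbf{g}}^+$ enters and closes the argument: since $|N_v(\textbf{g})| > |N_u(\textbf{g})|$, membership of $s^*$ in $\textit{PNE}_{\textbf{g}}^+$ demands $\Pi_v(s^*|\textbf{g}) \geq \Pi_u(s^*|\textbf{g})$, directly contradicting the strict inequality just obtained; hence $s_v^* = 0$. I do not anticipate a serious obstacle here. The only genuinely substantive step is the structural observation that every free-rider earns at least $f(\delta)$ while every active player earns strictly less than $f(\delta)$, so that free-riders are precisely the top earners. Once this payoff dichotomy is made explicit, the monotonicity requirement of $\textit{PNE}_{\textbf{g}}^+$ simply cannot be reconciled with a high-degree active player coexisting with a low-degree free-rider.
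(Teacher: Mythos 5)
Your proposal is correct and follows essentially the same argument as the paper's proof: both proceed by contradiction, use Theorem \ref{yek} to get $\sum_{j \in N_u(\textbf{g})} s_j^* \geq \delta$ and hence $\Pi_u(s^*|\textbf{g}) \geq f(\delta)$, show that an active $v$ would earn $\Pi_v(s^*|\textbf{g}) = f(\delta) - \textbf{c}s_v^* < f(\delta)$, and contradict the defining inequality of $\textit{PNE}_{\textbf{g}}^+$. Your explicit ``payoff dichotomy'' (free-riders earn at least $f(\delta)$, active players strictly less) is simply a cleanly packaged statement of the same computation the paper performs inline.
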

\begin{proof}
We prove the result by contradiction. Consider $s^*\in \textit{PNE}_{\textbf{g}}^+$. Let $u,v \in \mathcal{N}$, assume that $|N_u(\textbf{g})|<|N_v(\textbf{g})|$, $s_u^*=0$, and suppose $s_v^*=\epsilon>0.$ Then, 
since $s^*\in \textit{PNE}_{\textbf{g}}$, by Theorem \ref{yek} we must have 
\begin{eqnarray}
\label{3}
\sum_{j \in N_u(\textbf{g})}s_{j}^* \geq \delta.
\end{eqnarray} 
Furthermore, since $\epsilon>0$, $f$ is increasing, and $\sum_{j \in N_u(\textbf{g})}s_{j}^* \geq \delta$, it follows that
\begin{eqnarray}
\Pi_v(s^*|\textbf{g})=f(\delta)-\textbf{c}\epsilon < f\left(\sum_{j \in N_u(\textbf{g})}s_{j}^*\right)=\Pi_u(s^*|\textbf{g}),
\end{eqnarray}
which contradicts the fact that $s^* \in \textit{PNE}_{\textbf{g}}^+$. Consequently, we must have
\begin{eqnarray}
s_v^*=0.
\end{eqnarray}
\end{proof}
\begin{remark}
Using the arguments in the proof of Lemma \ref{lem} we can also establish the following results: 
\begin{itemize}
	\item Independently of an agent's degree, if $s^* \in \textit{PNE}_{\textbf{g}}$ and $s_i^*<s_j^*$ then $\Pi_i(s^*|\textbf{g})>\Pi_j(s^*|\textbf{g})$. Equivalently, if at equilibrium agent $i$ is no better off than agent $j$ then agent $i$ contributes no more than agent $j$.
  \item Let $s^* \in \textit{PNE}_{\textbf{g}}$ be such that $\sum_{k \in N_i(\textbf{g})}s_{k}^*+s_i^*=\mu$ for all $i\in \mathcal{N}.$ Then, 
  \begin{eqnarray*}
  s_i^* \leq s_j^* \quad \Leftrightarrow\quad \Pi_i(s^*|\textbf{g})\geq \Pi_j(s^*|\textbf{g}).
  \end{eqnarray*}
\end{itemize}
\end{remark}
Using the results on the existence of \textit{PNE}, Theorem \ref{yek} and Lemma \ref{lem}, we present  an example that provides a negative answer to Goyal's problem.
\end{section}
\begin{section}{An Example}
\label{iii}
Consider the network depicted in Figure \ref{fig2}, and denote it by $\textbf{g}$. Assume there exists $s^* \in \textit{PNE}_{\textbf{g}}^+$. Restrict attention to node $v_2$. We show that no matter what value $s_{v_2}^*$  takes, $s^*$ can not be  in $\textit{PNE}_{\textbf{g}}^+$, thus $\textit{PNE}_{\textbf{g}}^+$ is empty. We consider six cases that exhaust all possibilities. 
\begin{itemize}
	\item \textbf{Case 1}: $s_{v_2}^*=0$.\\
	Then $s_{v_2}^*=0$ along with Lemma \ref{lem} imply that 
	\begin{eqnarray}
	\label{6}
	&&s_{v_3}^*=s_{v_4}^*=s_{v_5}^*=0,
	\end{eqnarray}
	therefore,
	\begin{eqnarray}
	&&s_{v_2}^*+s_{v_3}^*+s_{v_4}^*+s_{v_5}^*=0
	\end{eqnarray}
	Since by assumption $s^* \in \textit{PNE}_{\textbf{g}}^+$, $s^* \in\textit{PNE}_{\textbf{g}}$. Therefore, $s^*$ must satisfy the condition of Theorem \ref{yek}. This further implies that 
\begin{eqnarray}
\label{7}
s_{v_3}^*+\sum_{v_j \in N_{v_3}(\textbf{g})}s_{v_j}^*=s_{v_3}^*+s_{v_2}^*+s_{v_4}^*+s_{v_5}^*\geq \delta.
\end{eqnarray}
But \eqref{7} contradicts \eqref{6}, thus we can not have any $s^* \in \textit{PNE}_{\textbf{g}}^+$ with $s_{v_2}^*=0.$
\item \textbf{Case 2}: $0 <s_{v_2}^* < \delta$ and $\Delta:=s_{v_4}^*+s_{v_5}^*=0$.\\
Consider node $v_6$. Since $|N_{v_6}(\textbf{g})|=5>4=|N_{v_5}(\textbf{g})|$ and $s_{v_5}^*=0$, Lemma \ref{lem} implies that 
\begin{eqnarray}
\label{9}
s_{v_6}^*=0. 
\end{eqnarray}
By the same argument, $s_{v_6}^*=0$ implies that 
\begin{eqnarray}
\label{10}
&&s_{v_{16}}^*=s_{v_{17}}^*=s_{v_{18}}^*=s_{v_{19}}^*=0,
\end{eqnarray}
On the other hand, by Theorem \ref{yek} we must have
\begin{eqnarray}
\label{11}
s_{v_6}^*+\sum_{v_j \in N_{v_6}(\textbf{g})}s_{v_j}^*\geq \delta
\end{eqnarray}
Equation \eqref{11}  contradicts Eq. \eqref{10}. 
\item \textbf{Case 3}: $0 <s_{v_2}^* < \delta$, $0<\Delta:=s_{v_4}^*+s_{v_5}^*<\delta$ and $0<\Delta+s_{v_2}^*<\delta$.\\
Since $s^* \in \textit{PNE}_{\textbf{g}}^+$, by Theorem \ref{yek} we must have 
\begin{eqnarray}
\label{12}
s_{v_3}^*+s_{v_4}^*+s_{v_5}^*+s_{v_2}^*=s_{v_3}^*+\Delta+s_{v_2}^*\geq \delta.
\end{eqnarray}
Furthermore, by assumption 
\begin{eqnarray}
\label{13}
\Delta+s_{v_2}^*<\delta.
\end{eqnarray}
Then Eqs. \eqref{12}, \eqref{13} and Theorem \ref{yek} imply that $s_{v_3}^*=\delta-(\Delta+s_{v_2}^*),$ or,  equivalently,  
\begin{eqnarray}
\label{14} 
s_{v_2}^*+s_{v_3}^*=\delta-\Delta<\delta,
\end{eqnarray}
because $\Delta>0.$\\
On other other hand, Theorem \ref{yek} applied to node $v_2$ gives
\begin{eqnarray}
\label{15}
s_{v_2}^*+s_{v_3}^*\geq \delta.
\end{eqnarray}
Equations \eqref{14} and \eqref{15} contradict one another.
\item \textbf{Case 4}: $0 <s_{v_2}^* < \delta$, $0<\Delta:=s_{v_4}^*+s_{v_5}^*<\delta$ and $\Delta+s_{v_2}^* \geq \delta$.\\
In this case,  Theorem \ref{yek} applied to node $v_3$ gives
\begin{eqnarray}
\label{16}
s_{v_3}^*+\sum_{v_j \in N_{v_3}(\textbf{g})}s_{v_j}^*\geq \delta
\end{eqnarray}
Inequality \eqref{16} combined with the assumption $\Delta+s_{v_2}^*\geq \delta$ result in
\begin{eqnarray}
\label{17}
s_{v_3}^*=0.
\end{eqnarray}
Equation \eqref{17} and Lemma \ref{lem} in turn imply that 
\begin{eqnarray}
\label{18}
s_{v_5}^*=s_{v_6}^*=s_{v_7}^*=s_{v_8}^*=0.
\end{eqnarray}
On the other hand, by Theorem \ref{yek} we must have
\begin{eqnarray}
\label{20}
s_{v_5}^*+\sum_{v_j \in N_{v_5}(\textbf{g})}s_{v_j}\geq \delta.
\end{eqnarray}
which contradicts \eqref{18}.
\item \textbf{Case 5}: $0 <s_{v_2}^* < \delta$, $\Delta:=s_{v_4}^*+s_{v_5}^* \geq \delta$.\\
The result of Theorem  \ref{yek} for node $v_3$ along with the above conditions imply 
\begin{eqnarray}
\label{21}
s_{v_3}^*=0.
\end{eqnarray}
Then, by Eq. \eqref{21} and Lemma \ref{lem} we must have 
\begin{eqnarray}
\label{22}
s_{v_4}^*=s_{v_5}^*=0,
\end{eqnarray}
%
which contradicts the assumption 
\begin{eqnarray}
\label{24}
s_{v_4}^*+s_{v_5}^*\geq \delta.
\end{eqnarray} 
\item \textbf{Case 6}: $s_{v_2}^*\geq \delta$.\\
Then, the result of  Theorem \ref{yek} for node $v_2$ and $s_{v_2}^*\geq \delta$ imply  
\begin{eqnarray}
\label{26}
s_{v_3}^*=0.
\end{eqnarray}
Equation \eqref{26} and lemma \ref{lem} result in
\begin{eqnarray}
\label{226}
s_{v_5}^*=0,
\end{eqnarray}
which along with Lemma \ref{lem} imply that
\begin{eqnarray}
\label{27}
s_{v_6}^*=s_{v_7}^*=s_{v_8}^*=0.
\end{eqnarray}
On the other hand, the result of Theorem \ref{yek} for node $v_5$ requires that 
\begin{eqnarray}
\label{29}
s_{v_5}^*+\sum_{v_j \in N_{v_5}(\textbf{g})}s_{v_j}\geq \delta
\end{eqnarray}
which contradicts \eqref{27}. 
\end{itemize}
The analysis of cases 1-6 demonstrates that no matter what value $s_{v_2}^*$ takes, $s^* \notin \textit{PNE}_{\textbf{g}}^+$. Consequently, the set
$\textit{PNE}_{\textbf{g}}^+$ is empty.
 \begin{figure}[t]
\label{fig2}
\centering
\includegraphics[height=5in,width=5in]{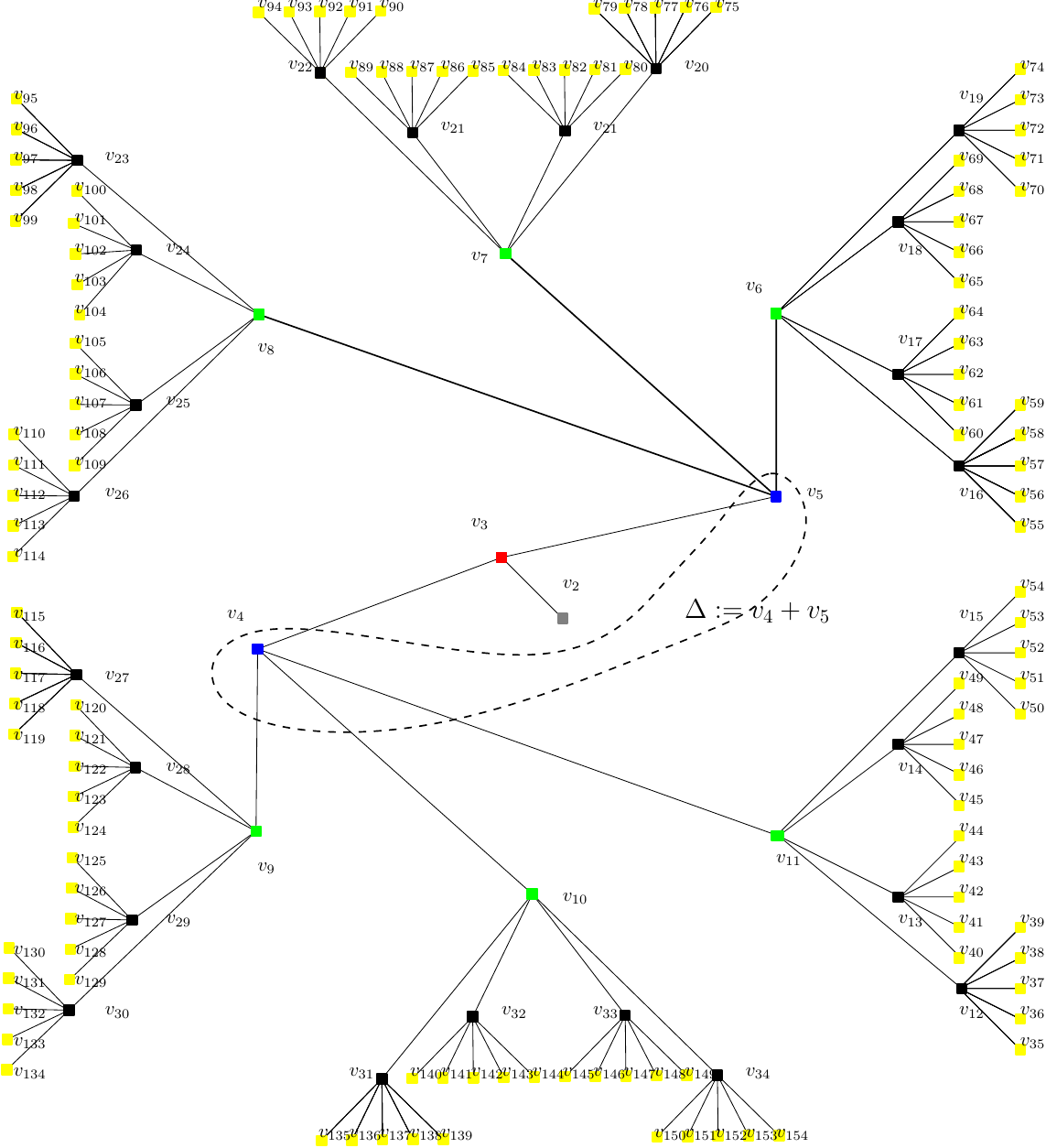}
\caption{$\textbf{g}$: An example}
\end{figure}  
\end{section}
\begin{section}{Conclusion/Discussion}
\label{iv}
Note that the example presented here is not necessarily minimal, but it introduces \emph{a class of topologies (games)} with  finite \textbf{tree structure}  such that\\
\begin{itemize}
\item (i) \textit{the degree of nodes (except the terminal/leaf nodes) increases monotonically as their distance from the root of the tree increases. }
\item (ii) \textit{the depth of the tree is at least 5.} 
\end{itemize}
For any game on a network with topology $\textbf{g}$ that belongs to the above described class, the set $\textit{PNE}_{\textbf{g}}^+$ is empty. This can be proved by \emph{exactly} the same arguments (Cases 1-6) as those presented in the tree structure of Section \ref{iii}. \\

Furthermore, consider any topology/game that results in from \emph{arbitrary} extensions along the leaf nodes of the tree structure of Section \ref{iii}. By the same arguments as in Section \ref{iii} we can show that $\textit{PNE}_{\textbf{g}}^+$ is empty for those topologies/games.\\

The form of the utility function, namely the fact that the function $f$ has all the properties stated in the model and is the same for all players, is critical in establishing a negative answer to Goyal's problem. The class of network   topologies $\textbf{g}$ for which $\textit{PNE}_{\textbf{g}}^+$ is nonempty remains unknown. 
\end{section}
\\

\noindent{\Large{\textbf{References}}}\\
\noindent{Bramoulle, Y. and R.~Kranton (2007), {''Public Goods in Networks''}, \textit{Journal of Economic Theory} 135, 478-494.}\\

\noindent{Goyal, S. (2007)  {''Connections, An Introduction to the economics of Networks''}, Princeton University Press.}\\

\noindent{Jackson, M.  (2008) {''Social and Economic Networks''}, Princeton University Press.}\\

\noindent{Vega-Redondo, F. (2007) {''Complex Social Networks''}, Cambridge University Press.}\\

\noindent{Fudenberg, D.  and Tirole, J. (1991) {''Game Theory''}, MIT Press.}


\end{document}


\section{A theorem}

\stmt{thrm}{sample}{Socrates is mortal.}

\refstmt{sample} can either be proven using data (the fact that
Socrates is dead), or by the proof which is provided in the appendix.

\section{Appendix}
In this appendix, the reader will find proofs of theorems not given in the text.

\stmtproof{sample}{Socrates is a man. All men are mortal.}

\rptstmtwithproof{sample}

\comment{Notice that you can put the \stmtproof{sample}{...} anywhere
you want, including right after your statement, just before using
\rptstmtwithproof, or a separate file (then use \input{proofs.tex}). That
way, neither your finished paper nor your source will be cluttered with
proofs.}